\documentclass[conference]{IEEEtran}

\IEEEoverridecommandlockouts

\usepackage[utf8]{inputenc} 
\usepackage[T1]{fontenc}
\usepackage{url}
\usepackage{cite}
\usepackage[cmex10]{amsmath} 

\usepackage{amssymb}
\usepackage{xcolor}
\usepackage{graphicx}
\usepackage{dirtytalk}

\renewcommand{\le}{\leqslant}
\renewcommand{\leq}{\leqslant}
\renewcommand{\ge}{\geqslant}
\renewcommand{\geq}{\geqslant}

\newcommand{\Cref}[1]{Co\-ro\-lla\-ry\,\ref{#1}}



\outer\def\proclaim #1. #2\par{\medbreak
 \noindent{\bf#1.\enspace}{\sl#2\par}%
 \ifdim\lastskip<\medskipamount \removelastskip\penalty55\medskip\fi}

\newtheorem{theorem}{Theorem}

\newtheorem{definition}{Definition}

\newtheorem{lemma}{Lemma}
\newtheorem{corollary}{Corollary}

\newtheorem{conjecture}{Conjecture}

\makeatletter
\def\@begintheorem#1#2{%
  \trivlist
  \item[\hskip\labelsep{\bfseries #1\ #2.}]%
}
\def\@opargbegintheorem#1#2#3{%
  \trivlist
  \item[\hskip\labelsep{\bfseries #1\ #2\ (#3).}]%
}
\makeatother

\begin{document}

\title{A Hyperbolic Bound for File Retrieval\\ in DNA-Based Data Storage%
\thanks{The work of D. Bar-Lev was supported in part by the
Swiss National Science Foundation under Grant 212865 and by
Schmidt Sciences.}}

\author{%
\IEEEauthorblockN{%
Constantinos Vasilios Argyris Vlachos\IEEEauthorrefmark{1}\IEEEauthorrefmark{3}
and Daniella Bar-Lev\IEEEauthorrefmark{1}\IEEEauthorrefmark{2}}
\IEEEauthorblockA{%
\IEEEauthorrefmark{1}\textit{Department of Mathematics,
Universit\"at Z\"urich}, Z\"urich, Switzerland\\
\IEEEauthorrefmark{3}\textit{IBM Research Europe}, Z\"urich, Switzerland\\
\IEEEauthorrefmark{2}\textit{Department of Computer Science,
Technion---Israel Institute of Technology}, Haifa 3200003, Israel\\
constantinosvasiliosargyris.vlachos@uzh.ch,\quad daniellalev@technion.ac.il}}

\maketitle

\begin{abstract}
In DNA-based storage systems, data are retrieved by sequencing molecules sampled from a DNA pool. We study how the way coding redundancy is shared between files affects their expected retrieval times. Our focus lies on the case of two files that are encoded by a systematic linear code over an arbitrary finite field. We consider the conjecture that the sum obtained by dividing each file dimension by its expected retrieval time is at most one whenever the dimension of at least one file is more than one. For this, we develop a geometric view of the retrieval process. As molecules are sampled, we follow the growing span of the corresponding columns of the generator matrix and track how much of this span comes from each file. Among the samples that enlarge the overall span, this lets us compare those that make progress toward recovering both files with those that make progress toward neither. We call a column mixed if the corresponding encoded symbol combines information from both files. In this work, we sharpen a projection bound and use it to control the effect of
mixed columns. We prove the conjecture whenever the total information dimension is at least twice the number of mixed columns plus two. For equal-sized files, this allows up to one fewer mixed column than the dimension of either file and extends the previous result for codes with no mixed columns.
\end{abstract}

\begin{IEEEkeywords}
DNA-based data storage, coded information retrieval, coupon collector problem, linear codes, random sampling.
\end{IEEEkeywords}

\section{Introduction}
DNA-based data storage has emerged as a promising medium for long-term archival storage due to its high information density and durability. The feasibility of DNA-based storage solutions has been demonstrated in experiments ranging from early proof-of-concept systems to large-scale and recent end-to-end implementations; see, e.g., \cite{church2012next,goldman2013towards,grass2015robust,erlich2017dna,organick2018random,bar2025scalable}. In a typical system, digital information is encoded into a collection of short DNA sequences, which are synthesized as molecules and stored together in an unordered storage container. To read the archive, molecules from the pool are sequenced, and the resulting reads are processed to reconstruct the designed sequences and decode the stored information. For further background on coding techniques and problems in DNA-based data storage, we refer the reader to the recent surveys~\cite{milenkovic2024dna,sabary2024survey}.

In DNA-based data storage, each encoded sequence is represented in the pool by many molecular copies. The sequencing process is therefore modeled by drawing encoded strands independently with replacement according to a probability distribution determined by the storage channel~\cite{heckel2019characterization}. We consider the uniform noiseless model, in which every encoded strand is sampled with the same probability and is observed without error. The resulting retrieval process is closely related to the classical Coupon Collector Problem~\cite{erdHos1961classical,flajolet1992birthday}. The presence of coding changes the stopping condition, since recovery occurs once the sampled encoded sequences contain enough information to decode the requested data, even if some sequences have not been observed. This coded sampling problem was introduced in~\cite{barlev2025cover}, which considered two extreme recovery objectives. 
In the first, the entire information vector must be recovered; this full-recovery problem was further studied in, e.g., ~\cite{hanna2026reliability,cao2026modeling,bertuzzo2026duality,grunbaum2026general}. In the second, only one information symbol is requested; this random-access problem was explored, for example,  in~\cite{gruica2025combinatorial,gruica2026geometry,boruchovsky2026making,bodur2026random,wang2026random}.

In practice, stored data are organized into files, and a user may wish to recover one or several files rather than a single information symbol or the entire database. Abraham et al.~partitioned the information into equal-sized files and studied the maximum expected time required to retrieve any requested collection of a fixed number of files~\cite{abraham2024covering}. More recently, Gruica et al.~allowed any number of information symbols to be requested and studied the maximum and average expected retrieval times over all requested sets of the same size, without fixing a partition into files~\cite{gruica2026generalized}. The worst-case objective used in these formulations is a natural criterion for balanced performance, since a code that improves the retrieval time for one request may perform worse for another. However, it does not describe the tradeoff between the retrieval times of specific files. In~\cite{barlev26coded}, the information was partitioned into two fixed files of arbitrary dimensions, and the objective was instead to characterize the achievable pairs of their expected retrieval times. This is the model considered here.

Let $E_i(G)$ denote the expected number of samples required to recover file $F_i$, whose dimension is $s_i$. It was conjectured in~\cite{barlev26coded} that every linear code satisfies
\[
\frac{s_1}{E_1(G)}+\frac{s_2}{E_2(G)}\leq 1
\]
whenever at least one file has dimension at least two. We refer to this inequality as the hyperbolic bound. The bound is known when every column of $G$ lies in one of the two file subspaces. It is also asymptotically tight, since file-dedicated MDS codes approach equality as the length grows. The main difficulty comes from columns with nonzero components in both file subspaces, which we call \emph{mixed columns}. Such a column may contribute to the recovery of both files, so an argument that treats the two retrieval processes separately may count its contribution twice.

In this work, we prove the hyperbolic bound for systematic linear codes over an arbitrary finite field, with no zero columns, whenever $k=s_1+s_2\geq 2M+2$, where $M$ is the number of mixed columns. We first introduce a defect that tracks the interaction between the two file subspaces and compare samples that advance both files with those that advance neither. Then we sharpen a projection bound and use it to control the contribution of the mixed columns.
For two files of equal dimension $s$, our result applies whenever $M\leq s-1$.

The remainder of the paper is organized as follows. Section~\ref{sec:prelim} introduces the notation and problem formulation. Section~\ref{sec:geometric} develops the geometric argument, and Section~\ref{sec:mixed_bounds} proves the bound in terms of the number of mixed columns. Section~\ref{sec:conclusion} concludes the paper.
 
\section{Notation and Problem Formulation}\label{sec:prelim}
\subsection{Notations}

Throughout, $k$ and $n$ are positive integers with $k \leq n$, $q$ is a prime power, $\mathbb{F}_q$ is the finite field with $q$ elements, and $[m] := \{1, \ldots, m\}$. The $r$-th harmonic number is $H_r := \sum_{i=1}^r \frac{1}{i}$, with $H_0 = 0$.

A generator matrix $G \in \mathbb{F}_q^{k \times n}$ of rank $k$ encodes $k$ information symbols into $n$ codeword symbols. Let $g_j$ denote the $j$-th column of $G$.
Throughout this work, we restrict attention to systematic codes and, after reordering the columns if necessary, write $G=[\,I_k\mid P\,]$.

Let $e_1,\ldots,e_k$ denote the standard basis vectors of $\mathbb F_q^k$. The information space $\mathbb{F}_q^k$ is partitioned into two files $F_1$ and $F_2$, where
$F_1 = \langle e_1, \ldots, e_{s_1} \rangle$ and $F_2 = \langle e_{s_1+1}, \ldots, e_k \rangle$,
and $s_1,s_2$ are positive integers satisfying $s_1+s_2=k$. In particular, $F_1\oplus F_2=\mathbb{F}_q^k$.

\begin{definition}
A subset $S \subseteq [n]$ forms a \emph{recovery set} for file $F_i$ if $F_i \subseteq \langle g_j : j \in S \rangle$.
\end{definition}

During retrieval, column indices
$\xi_1,\xi_2,\ldots$ are drawn independently and uniformly from $[n]$. For $i\in\{1,2\}$, the retrieval time of $F_i$ is
$\tau_{F_i}(G) :=
\min\left\{ t\geq 0: F_i\subseteq
\left\langle g_{\xi_1},\ldots,g_{\xi_t}
\right\rangle
\right\}$,
and its expected retrieval time is denoted by
\[
E_i(G):=\mathbb{E}[\tau_{F_i}(G)].
\]

Since $F_1 \oplus F_2 = \mathbb{F}_q^k$, every vector $v \in \mathbb{F}_q^k$ decomposes uniquely as $v = \pi_1(v) + \pi_2(v)$, where $\pi_i : \mathbb{F}_q^k \to F_i$ is the projection onto $F_i$ along $F_{3-i}$. Concretely, $\pi_1$ retains the first $s_1$ coordinates and zeros the rest; $\pi_2$ retains the last $s_2$ coordinates. We assume throughout that $G$ has no zero columns. A column $g_j$ is \emph{pure-$F_i$} if $\pi_{3-i}(g_j) = 0$ (equivalently $g_j \in F_i$), and \emph{mixed} if both projections are nonzero.

\begin{definition}
For any subspace $W \subseteq \mathbb{F}_q^k$,
\[
N(W) := |\{j \in [n] : g_j \in W\}|.
\]
\end{definition}

\subsection{The Universal Hyperbolic Bound}

The following conjecture was posed in~\cite{barlev26coded} for arbitrary
linear codes.

\begin{conjecture}[Universal Hyperbolic Bound]\label{conj:hyperbolic}
For any partition $s_1+s_2=k$ with $\max\{s_1,s_2\}\geq2$ and any
rank-$k$ generator matrix $G\in\mathbb F_q^{k\times n}$,
\begin{equation}
\frac{s_1}{E_1(G)}+\frac{s_2}{E_2(G)}\leq1.
\label{eq:hyperbolic}
\end{equation}
\end{conjecture}

It was shown in~\cite{barlev26coded} that this conjecture holds for:
\begin{itemize}
\item The identity code $G = I_k$.
\item All file-dedicated codes, where $N(F_1) + N(F_2) = n$ (this includes, in particular,  local MDS codes).
\item Global systematic MDS codes when $k \mid n$ (via dominance by local MDS).
\end{itemize}

In this work, we prove Conjecture~\ref{conj:hyperbolic} for every systematic code having at most $(k-2)/2$ columns that combine information from both files, that is, whenever
$$
N(F_1)+N(F_2)\geq n-\frac{k-2}{2}.
$$

\section{A Geometric Perspective on File Retrieval}\label{sec:geometric}

We introduce a geometric framework for analyzing the retrieval process by tracking the dimensions of the sampled subspace and its intersections with the two file subspaces. This viewpoint yields a pathwise relation between draws that advance both files and draws that advance neither.

\subsection{Dimension Tracking}
Let $V_0:=\{0\}$ and, for $t\geq 1$, let
$V_t:=\left\langle g_{\xi_1},\ldots,g_{\xi_t}\right\rangle$
denote the subspace spanned by the columns drawn up to time $t$. For $i\in\{1,2\}$, write $\tau_i:=\tau_{F_i}(G)$. We track the process through the three quantities
\[
(d_t,a_t,b_t)
:=
\bigl(\dim(V_t),\dim(V_t\cap F_1),\dim(V_t\cap F_2)\bigr).
\]

By the rank-nullity theorem applied to $\pi_1|_{V_t}$, whose kernel is $V_t\cap F_2$,
\[
d_t = \dim(\pi_1(V_t)) + \dim(V_t \cap F_2) = \dim(\pi_1(V_t)) + b_t.
\]
At the stopping time $\tau_1$, we have $F_1\subseteq V_{\tau_1}$ and hence $\pi_1(V_{\tau_1})=F_1$. Therefore,
\begin{equation}
d_{\tau_1}=s_1+b_{\tau_1},
\qquad
a_{\tau_1}=s_1.
\label{eq:state_at_tau1}
\end{equation}

\begin{lemma}\label{lem:increment}
Let $g:=g_{\xi_{t+1}}$ and suppose that $g\notin V_t$. Then
$d_{t+1}=d_t+1$. Moreover,
\begin{enumerate}
\item $a_{t+1}=a_t+1$ if and only if $\pi_2(g)\in\pi_2(V_t)$.
\item $b_{t+1}=b_t+1$ if and only if $\pi_1(g)\in\pi_1(V_t)$.
\end{enumerate}
\end{lemma}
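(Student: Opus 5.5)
The first claim is immediate: $V_{t+1}=V_t+\langle g\rangle$, and $g\notin V_t$ means this sum is direct, so $d_{t+1}=d_t+1$. For the two equivalences I will not compute $V_{t+1}\cap F_1$ directly. Instead I will use the rank–nullity decomposition already used for \eqref{eq:state_at_tau1}. Applying it to $\pi_2|_{V_t}$, whose kernel is $V_t\cap F_1$, gives
\[
d_t=\dim(\pi_2(V_t))+a_t ,
\]
and the same identity holds at time $t+1$. By symmetry, applying it to $\pi_1|_{V_t}$ gives $d_t=\dim(\pi_1(V_t))+b_t$, and likewise at time $t+1$.

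Subtracting the two instances for $\pi_2$ and using $d_{t+1}-d_t=1$ yields
\[
(a_{t+1}-a_t)+\bigl(\dim\pi_2(V_{t+1})-\dim\pi_2(V_t)\bigr)=1 .
\]
Since $\pi_2$ is linear, $\pi_2(V_{t+1})=\pi_2(V_t)+\langle\pi_2(g)\rangle$. So the projected dimension grows by exactly $1$ if $\pi_2(g)\notin\pi_2(V_t)$, and by $0$ otherwise. Both increments are nonnegative integers, because $V_t\cap F_1\subseteq V_{t+1}\cap F_1$, and they sum to $1$. Hence $a_{t+1}=a_t+1$ exactly when the projected dimension does not grow, that is, exactly when $\pi_2(g)\in\pi_2(V_t)$. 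This proves item~1.

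Item~2 follows by the identical argument with the roles of $\pi_1,\pi_2$ and $F_1,F_2$ exchanged, using $d_t=\dim(\pi_1(V_t))+b_t$.

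There is no real obstacle. The only point needing care is that each increment is at most one. This is automatic here, since the two nonnegative integer increments sum to $d_{t+1}-d_t=1$, so no separate argument about intersections is needed.
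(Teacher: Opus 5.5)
Your proof is correct, but it takes a different route from the paper's.

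The paper argues with explicit vectors. For the forward direction it picks $u\in V_t$ with $\pi_2(u)=-\pi_2(g)$ and observes that $u+g\in F_1\setminus V_t$. For the converse it writes a new element of $V_{t+1}\cap F_1$ as $u+\lambda g$, with $u\in V_t$ and $\lambda\neq 0$, and applies $\pi_2$.

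You instead subtract the rank-nullity identity $d=\dim\pi_2(V)+a$ at times $t$ and $t+1$. Both directions then follow from one fact: the increments of $a$ and of $\dim\pi_2(V)$ sum to one, and the second increment is $0$ or $1$ according to whether $\pi_2(g)\in\pi_2(V_t)$.

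Your approach gains two things:
\begin{itemize}
\item It gives the upper bound $a_{t+1}\le a_t+1$ for free. The paper's forward direction leaves this implicit: exhibiting one new vector in $V_{t+1}\cap F_1$ only shows $a_{t+1}\ge a_t+1$, and the matching bound needs $d_{t+1}=d_t+1$.
\item It shows that every dimension-increasing draw advances exactly one of $a_t$ and $\dim\pi_2(V_t)$, and exactly one of $b_t$ and $\dim\pi_1(V_t)$. This is a compact restatement of Corollary~\ref{cor:four_cases}.
\end{itemize}

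The paper's argument is more constructive in return: it names the actual vector $u+g$ that enlarges $V_t\cap F_1$, and it does not invoke rank-nullity at two time steps.
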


\begin{IEEEproof}
The dimension $d_t$ increases by $1$ whenever $g \notin V_t$. 
Suppose first that $\pi_2(g)\in\pi_2(V_t)$. Then there exists $u\in V_t$ such that $\pi_2(u)=-\pi_2(g)$. Hence $u+g\in F_1$. Moreover, $u+g\notin V_t$, since otherwise $g\in V_t$. Therefore, $a_{t+1}=a_t+1$.

Conversely, if $a_{t+1}=a_t+1$, then there exist $u\in V_t$ and $\lambda\neq 0$ such that $u+\lambda g\in F_1$. Applying $\pi_2$ gives 
\[
\pi_2(u)+\lambda\pi_2(g)=0
\]
and hence
\[
\pi_2(g)=-\lambda^{-1}\pi_2(u)\in\pi_2(V_t).
\]
The second claim follows symmetrically.

\end{IEEEproof}

The lemma yields the following four cases.

\begin{corollary}
\label{cor:four_cases}
Let $g:=g_{\xi_{t+1}}$ and suppose that $g\notin V_t$. Then exactly one of the following occurs:

\begin{enumerate}
\item $\pi_2(g) \notin \pi_2(V_t)$ and $\pi_1(g) \notin \pi_1(V_t)$: only $d$ increments.
\item $\pi_2(g) \in \pi_2(V_t)$ and $\pi_1(g) \notin \pi_1(V_t)$: $d$ and $a$ increment.
\item $\pi_2(g) \notin \pi_2(V_t)$ and $\pi_1(g) \in \pi_1(V_t)$: $d$ and $b$ increment.
\item $\pi_2(g) \in \pi_2(V_t)$ and $\pi_1(g) \in \pi_1(V_t)$: $d$, $a$, and $b$ all increment.
\end{enumerate}
\end{corollary}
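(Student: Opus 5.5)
This corollary follows directly from Lemma~\ref{lem:increment}: the two binary conditions in its hypotheses determine the four cases. Since $g\notin V_t$, the lemma gives $d_{t+1}=d_t+1$ in every case. So the only question is which of $a$ and $b$ also increment.

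I will use the two equivalences of Lemma~\ref{lem:increment} separately. Part~1 says that $a_{t+1}=a_t+1$ holds exactly when $\pi_2(g)\in\pi_2(V_t)$. Part~2 says that $b_{t+1}=b_t+1$ holds exactly when $\pi_1(g)\in\pi_1(V_t)$. The pair of truth values of the statements ``$\pi_2(g)\in\pi_2(V_t)$'' and ``$\pi_1(g)\in\pi_1(V_t)$'' falls into exactly one of the four combinations. This gives the ``exactly one'' claim immediately. In each combination, the two equivalences say precisely which of $a,b$ increment: neither, only $a$, only $b$, or both. This matches items 1--4.

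One point needs care: when $a$ or $b$ fails to increment, it stays the same rather than changing in some other way. Since $V_t\subseteq V_{t+1}$, we have $V_t\cap F_i\subseteq V_{t+1}\cap F_i$, so $a$ and $b$ are nondecreasing. Also $\dim V_{t+1}=\dim V_t+1$. For any subspace $W$, the intersection $V_{t+1}\cap W$ has dimension at most $\dim(V_t\cap W)+1$, because $V_t\cap W$ has codimension at most $1$ in $V_{t+1}\cap W$: it is the kernel of the quotient map $V_{t+1}\cap W\to V_{t+1}/V_t$. Hence each of $a,b$ increases by $0$ or $1$. The phrase ``only $d$ increments'' therefore means $a_{t+1}=a_t$ and $b_{t+1}=b_t$, and similarly for the other cases.

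No step is a real obstacle. The only subtlety is this at-most-one increment bound, which I will state explicitly so that the ``if and only if'' statements of the lemma fully determine $(a_{t+1},b_{t+1})$.
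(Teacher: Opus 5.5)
Your proposal is correct and follows the paper's route: the paper states the corollary as an immediate consequence of Lemma~\ref{lem:increment}, obtained by combining its two equivalences into the four truth-value combinations. Your explicit check that $a$ and $b$ each grow by at most one (via the kernel of $V_{t+1}\cap W\to V_{t+1}/V_t$) is a worthwhile addition. The paper uses this fact only implicitly: in the lemma's ``Therefore, $a_{t+1}=a_t+1$'' step, and in reading ``only $d$ increments'' as $a_{t+1}=a_t$ and $b_{t+1}=b_t$.
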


\begin{lemma}
\label{lem:conservation}
Let $\Gamma$ denote the number of draws up to time $\tau_1$ that fall into Case~4 of Corollary~\ref{cor:four_cases}. Then the number of draws that fall into Case~1 is also $\Gamma$.

\end{lemma}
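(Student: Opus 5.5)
Since every draw with $g_{\xi_{t+1}}\in V_t$ leaves $(d_t,a_t,b_t)$ unchanged, I will only count the draws up to $\tau_1$ that enlarge the span. By Corollary~\ref{cor:four_cases}, each of these falls into exactly one of the four cases. Write $C_1,C_2,C_3,C_4$ for the number of such draws in each case, so that $\Gamma=C_4$. The plan is to express the terminal values $d_{\tau_1},a_{\tau_1},b_{\tau_1}$ in terms of the $C_i$ and then use the relation \eqref{eq:state_at_tau1}, which holds at the stopping time $\tau_1$.

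Starting from $V_0=\{0\}$ gives $(d_0,a_0,b_0)=(0,0,0)$. Lemma~\ref{lem:increment} tells us which coordinates increase in each case, so summing over all draws up to $\tau_1$ yields
\begin{align*}
d_{\tau_1}&=C_1+C_2+C_3+C_4,\\
a_{\tau_1}&=C_2+C_4,\\
b_{\tau_1}&=C_3+C_4.
\end{align*}
These formulas use the fact that $d$ increases exactly when the drawn column is new, and that $a$ and $b$ increase by exactly one precisely in the cases listed. To justify this, note that $\dim(V\cap F_i)$ can grow by at most one when $\dim V$ grows by one, since $(V+\langle g\rangle)\cap F_i$ contains $V\cap F_i$ with codimension at most $1$. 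When $g\in V_t$, the subspace does not change at all.

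Next I substitute these expressions into \eqref{eq:state_at_tau1}. From $a_{\tau_1}=s_1$ we get $C_2+C_4=s_1$. From $d_{\tau_1}=s_1+b_{\tau_1}$ we get
\[
C_1+C_2+C_3+C_4=s_1+C_3+C_4 .
\]
Combining the two, $C_1+C_2=s_1=C_2+C_4$, and therefore $C_1=C_4=\Gamma$.

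The key ingredient is the identity $d_{\tau_1}=s_1+b_{\tau_1}$, which follows from rank--nullity and is already established in the paper. The only point needing care is that every coordinate increment is exactly $0$ or $1$ and matches the case classification. This is guaranteed by Lemma~\ref{lem:increment} together with the fact that non-spanning draws change nothing.

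Beyond $\tau_1$ one can derive a more general identity. For any $t$, the same counting gives $C_1(t)-C_4(t)=d_t-a_t-b_t$, and this equals $\dim\pi_1(V_t)-a_t$. That quantity vanishes whenever $\pi_1(V_t)=V_t\cap F_1$, which is exactly the situation at $\tau_1$. I would remark on this identity as a by-product, but the lemma itself needs only the computation at the stopping time.
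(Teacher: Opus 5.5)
Your proof is correct and is essentially the paper's argument: both count the span-enlarging draws by case, write $d_{\tau_1}$, $a_{\tau_1}$, $b_{\tau_1}$ in terms of the case counts, and obtain $C_1=C_4$ from $a_{\tau_1}=s_1$ and $d_{\tau_1}=s_1+b_{\tau_1}$. Your closing remark $C_1(t)-C_4(t)=d_t-a_t-b_t$ is exactly the defect identity the paper proves next in Lemma~\ref{lem:defect_dyn}.
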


\begin{IEEEproof}
A draw falls into one of Cases~1--4 exactly when it increases $d$.
Hence, by~\eqref{eq:state_at_tau1}, the total number of such draws up
to time $\tau_1$ is $d_{\tau_1}=s_1+b_{\tau_1}$.

Since $a_0=b_0=0$,  $\eqref{eq:state_at_tau1}$ implies that by time $\tau_1$ the quantities $a$ and $b$ have increased $s_1$ and $b_{\tau_1}$ times, respectively. Case~4 occurs $\Gamma$ times, and each such draw increases both $a$ and $b$. Hence, Case~2 occurs $s_1-\Gamma$ times and Case~3 occurs $b_{\tau_1}-\Gamma$ times.
It follows that the number
of Case~1 draws is
\[
d_{\tau_1}-(s_1-\Gamma)-(b_{\tau_1}-\Gamma)-\Gamma=\Gamma.
\]
\end{IEEEproof}

This balance at $\tau_1$ will be extended to a pathwise inequality in the next subsection. 

\subsection{The Defect and Its Consequences}

We extend Lemma~\ref{lem:conservation} to a pathwise inequality between Cases~1 and~4 that holds at every time $T$. We then derive a direct-sum decomposition at the recovery times, bound the number of simultaneous increments in terms of the number of mixed columns, and reformulate Conjecture~\ref{conj:hyperbolic} using quantities associated with the retrieval process.

The starting point is a single integer-valued invariant of the drawn subspace.

\begin{definition}
For $T\geq 0$, the \emph{defect} of $V_T$ is
\[
\begin{aligned}
\operatorname{def}(V_T)
&:=d_T-a_T-b_T\\
&=\dim(V_T)-\dim(V_T\cap F_1)-\dim(V_T\cap F_2).
\end{aligned}
\]
\end{definition}

\begin{lemma}\label{lem:defect_nonneg}
For every $T\ge 0$, $\operatorname{def}(V_T)\ge 0$.
\end{lemma}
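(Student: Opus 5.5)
The plan is to prove the inequality directly from linear algebra, independently of the sampling process. The key fact is that $V_T\cap F_1$ and $V_T\cap F_2$ are subspaces of $V_T$ meeting only in zero. Indeed, $(V_T\cap F_1)\cap(V_T\cap F_2)\subseteq F_1\cap F_2=\{0\}$, because $F_1\oplus F_2=\mathbb F_q^k$ is a direct sum.

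First, we observe that the sum $(V_T\cap F_1)+(V_T\cap F_2)$ is a subspace of $V_T$, since both summands lie in $V_T$. Next, we apply the dimension formula for sums of subspaces:
\[
\dim\bigl((V_T\cap F_1)+(V_T\cap F_2)\bigr)=a_T+b_T-\dim\bigl(V_T\cap F_1\cap F_2\bigr)=a_T+b_T .
\]
Monotonicity of dimension under inclusion then gives $a_T+b_T\le d_T$, which is exactly $\operatorname{def}(V_T)\ge 0$.

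As a cross-check, there is an alternative route using the process itself. Note that $\operatorname{def}(V_0)=0$. By Corollary~\ref{cor:four_cases}, each draw with $g\notin V_t$ changes the defect by $+1$, $0$, $0$, or $-1$ in Cases~1--4 respectively, and draws with $g\in V_t$ leave it unchanged. However, this route only shows that the defect equals (number of Case~1 draws) minus (number of Case~4 draws). Nonnegativity would then still need the direct-sum argument above, so I will use the static argument as the proof.

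There is no real obstacle here. The only point needing care is to state explicitly that $F_1\cap F_2=\{0\}$, so that the intersection term in the dimension formula vanishes.
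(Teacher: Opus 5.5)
Your proof is correct and is the same as the paper's: since $F_1\cap F_2=\{0\}$, the subspaces $V_T\cap F_1$ and $V_T\cap F_2$ meet trivially, so their direct sum has dimension $a_T+b_T$ and lies inside $V_T$. You are also right that the process-based route only gives $\operatorname{def}(V_T)=C_1(T)-C_4(T)$ and not nonnegativity; that is why the paper proves this lemma statically and then combines it with Lemma~\ref{lem:defect_dyn}.
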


\begin{IEEEproof}
Since $F_1\cap F_2=\{0\}$, the subspaces $V_T\cap F_1$ and $V_T\cap F_2$ also intersect trivially, so their sum is direct and has dimension $a_T+b_T$. This direct sum is contained in $V_T$, giving $a_T+b_T\le d_T$.
\end{IEEEproof}

\begin{lemma}\label{lem:defect_dyn}
A Case~1 draw increases the defect by $1$, a Case~4 draw decreases it by $1$, and every other draw leaves it unchanged. Consequently, if $C_i(T)$ denotes the number of Case~$i$ draws among the first $T$ draws, then
\begin{equation}\label{eq:defect_counts}
\operatorname{def}(V_T)=C_1(T)-C_4(T).
\end{equation}
\end{lemma}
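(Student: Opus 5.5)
The plan is to track how one draw changes the defect and then add these changes over time. Since $\operatorname{def}(V_T)=d_T-a_T-b_T$, a single step changes it by $\Delta d-\Delta a-\Delta b$. If the drawn column already lies in $V_t$, then $V_{t+1}=V_t$, so $d$, $a$ and $b$ are all unchanged and the defect stays the same. Such draws belong to none of Cases~1--4, which is consistent with the claim that ``every other draw'' leaves the defect unchanged.

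If $g\notin V_t$, Lemma~\ref{lem:increment} gives $\Delta d=1$. The increments $\Delta a$ and $\Delta b$ are then read off from Corollary~\ref{cor:four_cases}. One point needs justification: in the cases where $a$ (respectively $b$) is not listed as incrementing, it must stay fixed, and when it does increase it must increase by exactly $1$. Both facts follow from monotonicity together with the inequality $\dim(V_{t+1}\cap F_1)\le \dim(V_t\cap F_1)+1$. That inequality holds because $V_t\cap F_1$ has codimension at most one in $V_{t+1}\cap F_1$, being its intersection with $V_t$, which is a hyperplane of $V_{t+1}$. The ``if and only if'' statements of Lemma~\ref{lem:increment} then pin down $\Delta a,\Delta b\in\{0,1\}$ exactly.

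With the increments fixed, the case-by-case computation of $1-\Delta a-\Delta b$ is:
\begin{itemize}
\item Case~1: $1-0-0=+1$;
\item Case~2: $1-1-0=0$;
\item Case~3: $1-0-1=0$;
\item Case~4: $1-1-1=-1$.
\end{itemize}

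Finally, I would sum these increments over $t=0,\dots,T-1$. Because $V_0=\{0\}$, we have $\operatorname{def}(V_0)=0$, and the sum telescopes to $\operatorname{def}(V_T)=C_1(T)-C_4(T)$, which is \eqref{eq:defect_counts}. A short induction on $T$ gives the same result.

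The only step that is not pure bookkeeping is the claim that $a$ and $b$ change by at most one per draw, and the codimension argument above handles it.
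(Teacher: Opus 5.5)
Your proposal is correct and follows the paper's proof: draws with $g\in V_t$ leave the defect unchanged, Cases~1--4 change $(d,a,b)$ by $(1,0,0)$, $(1,1,0)$, $(1,0,1)$, $(1,1,1)$, and summing from $\operatorname{def}(V_0)=0$ gives the identity. Your codimension argument that $a$ and $b$ rise by at most one per draw is a worthwhile addition, since the paper uses this fact without stating or proving it (already in Lemma~\ref{lem:increment}).
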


\begin{IEEEproof}
If $g_{\xi_{t+1}}\in V_t$, then $V_{t+1}=V_t$, so the defect does not change. Otherwise, Corollary~\ref{cor:four_cases} shows that the changes in $(d,a,b)$ in Cases~1--4 are, respectively, $(1,0,0)$, $(1,1,0)$, $(1,0,1)$, and $(1,1,1)$. Thus, the corresponding changes in the defect are $1,0,0$, and $-1$. Since
$\operatorname{def}(V_0)=0$, summing these changes over the first $T$ draws gives~\eqref{eq:defect_counts}.
\end{IEEEproof}

\begin{theorem}[Pathwise Pairing Inequality]\label{thm:pairing}
For any rank-$k$ systematic generator matrix $G=[\,I_k\mid P\,]\in\mathbb{F}_q^{k\times n}$, any partition
$s_1+s_2=k$, and every $T\geq 0$,
\begin{equation}\label{eq:pathwise_pairing}
C_4(T)\leq C_1(T).
\end{equation}
\end{theorem}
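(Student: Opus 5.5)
This follows directly by combining Lemma~\ref{lem:defect_dyn} with Lemma~\ref{lem:defect_nonneg}. For every $T\ge 0$, Lemma~\ref{lem:defect_dyn} gives the pathwise identity
\[
\operatorname{def}(V_T)=C_1(T)-C_4(T).
\]
Lemma~\ref{lem:defect_nonneg} gives $\operatorname{def}(V_T)\ge 0$. Together these yield $C_4(T)\le C_1(T)$, which is~\eqref{eq:pathwise_pairing}.

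Both lemmas hold for every realization of the sampling sequence $\xi_1,\xi_2,\ldots$, so the inequality is pathwise, not merely in expectation. Neither the systematic form of $G$ nor the particular partition is needed beyond $F_1\oplus F_2=\mathbb F_q^k$. That decomposition is what makes the direct-sum argument in Lemma~\ref{lem:defect_nonneg} valid and what makes Corollary~\ref{cor:four_cases} exhaustive.

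Two details need checking. The first is that $\operatorname{def}(V_0)=0$. This holds because $V_0=\{0\}$, which gives $d_0=a_0=b_0=0$. The second is that draws with $g_{\xi_{t+1}}\in V_t$ leave $C_1$, $C_4$, and the defect unchanged. This holds because such draws fall into none of the four cases and do not change $V_t$.

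There is no real obstacle. The substantive content is the exact case analysis in Lemma~\ref{lem:increment}, which we may take as given. As a consistency check, at $T=\tau_1$ the inequality is an equality: Lemma~\ref{lem:conservation} gives $C_1(\tau_1)=C_4(\tau_1)=\Gamma$. Equivalently, $V_{\tau_1}=F_1\oplus(V_{\tau_1}\cap F_2)$ has zero defect, since by~\eqref{eq:state_at_tau1} we have $d_{\tau_1}=s_1+b_{\tau_1}=a_{\tau_1}+b_{\tau_1}$.
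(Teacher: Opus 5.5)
Your proposal is correct and is exactly the paper's argument: the identity $\operatorname{def}(V_T)=C_1(T)-C_4(T)$ from Lemma~\ref{lem:defect_dyn}, combined with $\operatorname{def}(V_T)\ge 0$ from Lemma~\ref{lem:defect_nonneg}, gives $C_4(T)\le C_1(T)$ pathwise. Your side remarks are also accurate: the argument uses only $F_1\oplus F_2=\mathbb{F}_q^k$, and equality holds at $\tau_1$ by Lemma~\ref{lem:conservation}.
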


\begin{IEEEproof}
By combining Lemmas~\ref{lem:defect_nonneg} and~\ref{lem:defect_dyn},
$$
C_1(T)-C_4(T)=\operatorname{def}(V_T)\ge 0.$$
\end{IEEEproof}

We next record several consequences of the preceding results, beginning with the behavior of the process at the recovery times.

\begin{corollary}\label{cor:def_zero_at_tau}
For each $i\in\{1,2\}$, at time $\tau_i$ we have
\[
\operatorname{def}(V_{\tau_i})=0
\qquad\text{and}\qquad
C_1(\tau_i)=C_4(\tau_i).
\]
\end{corollary}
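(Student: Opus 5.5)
We need to show that at time $\tau_i$ the sampled space splits as $V_{\tau_i}=(V_{\tau_i}\cap F_1)\oplus(V_{\tau_i}\cap F_2)$, so its defect vanishes. The identity $C_1(\tau_i)=C_4(\tau_i)$ then follows at once from \eqref{eq:defect_counts} in Lemma~\ref{lem:defect_dyn}. By symmetry we treat $i=1$; the case $i=2$ is identical with the roles of $\pi_1$ and $\pi_2$ exchanged.

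The direct route uses \eqref{eq:state_at_tau1}. At $\tau_1$ we have $F_1\subseteq V_{\tau_1}$, so $a_{\tau_1}=s_1$ and $d_{\tau_1}=s_1+b_{\tau_1}$. Substituting into the definition of the defect gives
\[
\operatorname{def}(V_{\tau_1})=d_{\tau_1}-a_{\tau_1}-b_{\tau_1}=(s_1+b_{\tau_1})-s_1-b_{\tau_1}=0.
\]

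Alternatively, one can exhibit the decomposition directly. Take any $v\in V_{\tau_1}$. Since $\pi_1(v)\in F_1\subseteq V_{\tau_1}$, also $\pi_2(v)=v-\pi_1(v)\in V_{\tau_1}\cap F_2$. Hence $V_{\tau_1}=(V_{\tau_1}\cap F_1)+(V_{\tau_1}\cap F_2)$, and this sum is direct because $F_1\cap F_2=\{0\}$. Taking dimensions gives $d_{\tau_1}=a_{\tau_1}+b_{\tau_1}$, i.e.\ zero defect.

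With $\operatorname{def}(V_{\tau_1})=0$ in hand, evaluating \eqref{eq:defect_counts} at $T=\tau_1$ yields $C_1(\tau_1)=C_4(\tau_1)$. This recovers Lemma~\ref{lem:conservation} as a special case.

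There is no real obstacle. The one point to check is that $\tau_i$ is almost surely finite, so that $V_{\tau_i}$ is defined. This holds because the systematic columns $e_1,\ldots,e_k$ are each drawn with positive probability, so every file is eventually recovered almost surely. On the null event $\tau_i=\infty$ the statement can be taken as vacuous.
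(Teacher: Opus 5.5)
Your proposal is correct and follows the paper's argument: the zero defect at $\tau_i$ comes from \eqref{eq:state_at_tau1} (and its symmetric version for $\tau_2$), and evaluating \eqref{eq:defect_counts} at $T=\tau_i$ then gives $C_1(\tau_i)=C_4(\tau_i)$. Your alternative decomposition argument is also valid; it is essentially Corollary~\ref{cor:purification} run in the opposite direction, deriving the zero defect from the direct sum rather than the direct sum from the zero defect.
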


The vanishing of the defect gives the following direct-sum decomposition at each recovery time.

\begin{corollary}\label{cor:purification}
For each $i\in\{1,2\}$,
\[
V_{\tau_i}
=
F_i\oplus\bigl(V_{\tau_i}\cap F_{3-i}\bigr).
\]
\end{corollary}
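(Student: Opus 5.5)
We will derive Corollary~\ref{cor:purification} from Corollary~\ref{cor:def_zero_at_tau} by a dimension count. Fix $i\in\{1,2\}$ and write $j=3-i$.

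The first step is to check that $F_i\subseteq V_{\tau_i}$, which holds by the definition of $\tau_i$. Consequently $V_{\tau_i}\cap F_i=F_i$, and the dimension of this intersection is $s_i$.

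The second step uses the vanishing defect. Since $\operatorname{def}(V_{\tau_i})=0$, we have
\[
\dim(V_{\tau_i})=\dim(V_{\tau_i}\cap F_1)+\dim(V_{\tau_i}\cap F_2)=s_i+\dim(V_{\tau_i}\cap F_j).
\]
For completeness, note why the defect vanishes; this is the content of Corollary~\ref{cor:def_zero_at_tau} and may simply be invoked. For $i=1$, equation~\eqref{eq:state_at_tau1} gives $d_{\tau_1}=s_1+b_{\tau_1}$ and $a_{\tau_1}=s_1$, so the defect is $0$. For $i=2$ the same argument applies symmetrically, via rank-nullity for $\pi_2|_{V_{\tau_2}}$, whose kernel is $V_{\tau_2}\cap F_1$.

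The third step concludes. The subspaces $F_i$ and $V_{\tau_i}\cap F_j$ both lie in $V_{\tau_i}$. They intersect trivially, because $F_i\cap F_j=\{0\}$, so their sum is direct and has dimension $s_i+\dim(V_{\tau_i}\cap F_j)$. By the second step this equals $\dim(V_{\tau_i})$. A subspace of equal finite dimension coincides with the ambient space, so the direct sum equals $V_{\tau_i}$.

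No step here is a real obstacle. The only point requiring care is the case $i=2$ of the defect vanishing, and that is covered by Corollary~\ref{cor:def_zero_at_tau} or by the symmetric rank-nullity argument.
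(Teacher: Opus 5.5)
Your proposal is correct and follows essentially the same argument as the paper. Both note that $F_i\subseteq V_{\tau_i}$ places the direct sum inside $V_{\tau_i}$. Both then use the vanishing defect from Corollary~\ref{cor:def_zero_at_tau} to match dimensions, which forces equality.
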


\begin{IEEEproof}
Since $F_i\subseteq V_{\tau_i}$, the direct sum on the right is contained in $V_{\tau_i}$.
By Corollary~\ref{cor:def_zero_at_tau} and the definition of the defect,
$
\dim(V_{\tau_i})
=
s_i+\dim(V_{\tau_i}\cap F_{3-i}).
$
Thus, both subspaces have the same dimension and are therefore equal.
\end{IEEEproof}

We next use the pathwise pairing inequality to bound the number of Case~4 draws in terms of the number of mixed columns.

\begin{corollary}\label{cor:gamma_bound}
Let
\[
M:=\bigl|\{j\in[n]:g_j\notin F_1\cup F_2\}\bigr|
\]
be the number of mixed columns of $G$. Then, for every $T\geq 0$,
\[
C_4(T)\leq C_1(T)\leq M.
\]
In particular, $\Gamma=C_4(\tau_1)\leq M$.
\end{corollary}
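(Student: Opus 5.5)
The plan is to combine Theorem~\ref{thm:pairing}, which already gives $C_4(T)\le C_1(T)$, with a direct counting argument showing $C_1(T)\le M$. For the second inequality I would show two things: every Case~1 draw is a mixed column, and each column can produce at most one Case~1 draw.

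\emph{Step 1: only mixed columns can be Case~1 draws.} Suppose $g=g_{\xi_{t+1}}$ is pure-$F_1$. Then $\pi_2(g)=0\in\pi_2(V_t)$, so the first condition of Case~1 in Corollary~\ref{cor:four_cases} fails. Symmetrically, if $g$ is pure-$F_2$, then $\pi_1(g)=0\in\pi_1(V_t)$. Since $G$ has no zero columns, every column is either pure-$F_1$, pure-$F_2$, or mixed. Hence any Case~1 draw must be one of the $M$ mixed columns.

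\emph{Step 2: each column yields at most one Case~1 draw.} By definition, Cases~1--4 require $g\notin V_t$, so each such draw increases $d$. Once column $g_j$ has been drawn at time $t+1$, we have $g_j\in V_{t'}$ for all $t'\ge t+1$. Any later draw of the same index therefore satisfies $g\in V_{t'}$ and falls into none of the four cases. Thus the Case~1 draws among the first $T$ draws correspond to pairwise distinct indices, all of them mixed by Step~1, so $C_1(T)\le M$.

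Chaining this with Theorem~\ref{thm:pairing} gives $C_4(T)\le C_1(T)\le M$ for every $T\ge 0$. The final claim follows from $\Gamma=C_4(\tau_1)$, which is exactly the definition of $\Gamma$ in Lemma~\ref{lem:conservation}, by specializing to $T=\tau_1$. Note that $\tau_1$ is finite almost surely, since the identity columns are eventually drawn.

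The argument is elementary. The only point needing care is Step~2, where one must check that repeated draws of an already-sampled column are excluded from all four cases because $g\notin V_t$ fails. This is what prevents a single mixed column from being counted more than once.
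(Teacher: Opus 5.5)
Your proposal is correct and follows essentially the same route as the paper. You take $C_4(T)\le C_1(T)$ from Theorem~\ref{thm:pairing}, then get $C_1(T)\le M$ by noting that Case~1 draws must be mixed columns and that each column can increase the sampled span at most once.
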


\begin{IEEEproof}
The first inequality follows from Theorem~\ref{thm:pairing}. If a draw falls into Case~1, then both projections of the drawn column are nonzero, so the column is mixed. Moreover, each column can contribute to Case~1 at most once, since after it is first drawn it belongs to the sampled subspace. Therefore, $C_1(T)\leq M$.
\end{IEEEproof}

The corollary connects the pathwise interaction between the two files to a fixed property of the code. A Case~4 draw increases both $a$ and $b$, but the total number of such draws is at most $M$. Thus, when $G$ has few mixed columns, simultaneous progress toward recovering the two files can occur only a limited number of times.

We next decompose each retrieval time into draws that increase the dimension of the sampled subspace and draws that do not. For $T\geq 0$, let $W(T)$ denote the number of draws among the first $T$ that do not increase $\dim(V_t)$.

\begin{corollary}\label{cor:tau_decomp}
For each $i\in\{1,2\}$,
\begin{equation}\label{eq:tau_decomp}
\tau_i
=
W(\tau_i)+s_i+\dim(V_{\tau_i}\cap F_{3-i}).
\end{equation}
Consequently,
\begin{equation}\label{eq:E_decomp}
E_i(G)
=
s_i+\mathbb E\!\left[\dim(V_{\tau_i}\cap F_{3-i})\right]
+\mathbb E[W(\tau_i)].
\end{equation}
The middle term in~\eqref{eq:E_decomp} equals
$\mathbb E[b_{\tau_1}]$ and $\mathbb E[a_{\tau_2}]$ for $i=1$ and $i=2$, respectively.
\end{corollary}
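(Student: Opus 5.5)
We prove Corollary~\ref{cor:tau_decomp} by splitting the first $\tau_i$ draws according to whether they increase $\dim(V_t)$, and then evaluating the dimension at the stopping time with Corollary~\ref{cor:purification}.

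\emph{Pathwise identity.} Each of the first $\tau_i$ draws either leaves $V_t$ unchanged or increases $d_t$ by exactly one (Lemma~\ref{lem:increment}). By definition there are $W(\tau_i)$ draws of the first kind. Since $d_0=0$, the number of draws of the second kind equals $d_{\tau_i}$, so
\[
\tau_i = W(\tau_i) + d_{\tau_i}.
\]
Corollary~\ref{cor:purification} gives $V_{\tau_i}=F_i\oplus(V_{\tau_i}\cap F_{3-i})$. Hence
\[
d_{\tau_i}=s_i+\dim(V_{\tau_i}\cap F_{3-i}),
\]
which is~\eqref{eq:tau_decomp}. For $i=1$ this agrees with~\eqref{eq:state_at_tau1}.

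\emph{Expectation.} We take expectations of~\eqref{eq:tau_decomp} and use linearity. The only point needing care is that each term is finite; this is the main (mild) obstacle. The dimension term is bounded by $s_{3-i}$. The term $W(\tau_i)$ is bounded by $\tau_i$. Moreover $\tau_i\le\tau_{F_1\oplus F_2}$, and the full recovery time is dominated by the time to draw all $k$ systematic columns. That is a coupon-collector time over $k$ specified coupons among $n$, so it has finite mean $nH_k$. Since all terms are nonnegative with finite expectation, linearity applies and yields~\eqref{eq:E_decomp}.

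\emph{The middle term.} By definition, $\dim(V_{\tau_1}\cap F_2)=b_{\tau_1}$ and $\dim(V_{\tau_2}\cap F_1)=a_{\tau_2}$. Taking expectations gives the final claim.
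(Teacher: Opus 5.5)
Your proof is correct and matches the paper's argument: both write $\tau_i=W(\tau_i)+d_{\tau_i}$ and then evaluate $d_{\tau_i}=s_i+\dim(V_{\tau_i}\cap F_{3-i})$. The paper gets this identity from rank--nullity for $\pi_i|_{V_{\tau_i}}$, while you get it from Corollary~\ref{cor:purification}, which rests on that same identity. Your integrability check, bounding $\tau_i$ by the coupon-collector time for the $k$ systematic columns with mean $nH_k$, is a detail the paper leaves implicit when it simply says ``taking expectations.''
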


\begin{IEEEproof}
Among the first $T$ draws, exactly $W(T)$ do not increase the dimension of $V_t$, while each of the remaining $T-W(T)$ draws increases it by one. Since $d_0=0$, it follows that $T=W(T)+d_T.$ At time $\tau_i$, the rank-nullity identity gives
$
d_{\tau_i}
=
s_i+\dim(V_{\tau_i}\cap F_{3-i})$,
which proves~\eqref{eq:tau_decomp}. Taking expectations
gives~\eqref{eq:E_decomp}.
\end{IEEEproof}

Equation~\eqref{eq:E_decomp} separates $E_i(G)$ into three contributions: the $s_i$ dimensions required to recover $F_i$, the expected dimension of $V_{\tau_i}\cap F_{3-i}$ when $F_i$ is recovered, and the expected number of draws that do not increase the sampled subspace.

Since $b$ increases in Cases~3 and~4, while $a$ increases in
Cases~2 and~4, we have
\[
b_{\tau_1}=C_3(\tau_1)+C_4(\tau_1),
\qquad
a_{\tau_2}=C_2(\tau_2)+C_4(\tau_2).
\]
Therefore, Corollary~\ref{cor:gamma_bound} gives
\[
\mathbb E[b_{\tau_1}]
\leq \mathbb E[C_3(\tau_1)]+M,
\qquad
\mathbb E[a_{\tau_2}]
\leq \mathbb E[C_2(\tau_2)]+M.
\]

Finally, substituting~\eqref{eq:E_decomp} into Conjecture~\ref{conj:hyperbolic} shows that the conjecture is equivalent to
\begin{equation}\label{eq:hyperbolic_pathwise}
\frac{s_1}
{s_1+\mathbb E[b_{\tau_1}]+\mathbb E[W(\tau_1)]}
+
\frac{s_2}
{s_2+\mathbb E[a_{\tau_2}]+\mathbb E[W(\tau_2)]}
\leq 1.
\end{equation}

\section{Bounds in Terms of Mixed Columns}\label{sec:mixed_bounds}
The preceding section showed that the number of simultaneous increments is controlled by the number $M$ of mixed columns. We now obtain a complementary bound by projecting the sampled columns onto each file.
Spanning the corresponding projected space is necessary for recovering a file, and this gives lower bounds on $E_1(G)$ and $E_2(G)$. Combining these bounds will yield a sufficient condition for Conjecture~\ref{conj:hyperbolic} in terms of $M$.

Since $G$ has no zero columns, every column is either contained in $F_1$, contained in $F_2$, or mixed. Therefore,
\begin{equation}\label{eq:column_census}
N(F_1)+N(F_2)+M=n.
\end{equation}

For each $i\in\{1,2\}$, let
\begin{equation}\label{eq:mu_def}
\mu_i:=n-N(F_{3-i})
=
\bigl|\{j\in[n]:\pi_i(g_j)\neq 0\}\bigr|.
\end{equation}
Thus, $\mu_i$ is the number of columns with a nonzero component in $F_i$. Since $G$ is systematic, $\mu_i\ge s_i$. Moreover, each pure column contributes to exactly one of $\mu_1$ and $\mu_2$, while each mixed column contributes to both. Since $G$ has no zero columns, it follows that
\begin{equation}\label{eq:mu_sum}
\mu_1+\mu_2=n+M.
\end{equation}

For each $i\in\{1,2\}$, define
\[
\tau_i^{\mathrm{proj}}
:=
\min\{t\geq 0:\pi_i(V_t)=F_i\}.
\]
As shown in~\cite[Lem.~13]{barlev26coded},
$\tau_i^{\mathrm{proj}}\leq\tau_i$
and hence
\[
\mathbb E[\tau_i^{\mathrm{proj}}]\leq E_i(G).
\]

The bound in~\cite[Lem.~14]{barlev26coded} treats all $\mu_i$ columns with nonzero projection as potentially increasing the projected dimension at every stage. We sharpen this count by observing that once the projected span has dimension $\ell$, at least $\ell$ of these columns already belong to that span.

\begin{theorem}\label{thm:projected_bound}
For each $i\in\{1,2\}$,
\begin{equation}\label{eq:projected_bound}
E_i(G)
\geq
\mathbb E[\tau_i^{\mathrm{proj}}]
\geq
n\bigl(H_{\mu_i}-H_{\mu_i-s_i}\bigr).
\end{equation}
\end{theorem}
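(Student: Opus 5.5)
The first inequality is the cited fact $\tau_i^{\mathrm{proj}}\le\tau_i$ from~\cite[Lem.~13]{barlev26coded}, so only the second needs proof. I will follow the projected process $\ell_t:=\dim(\pi_i(V_t))$. This process starts at $\ell_0=0$, increases by at most one per draw, and $\tau_i^{\mathrm{proj}}$ is the first time it reaches $s_i$. Splitting time into phases $\ell=0,1,\dots,s_i-1$, with $D_\ell$ the number of draws spent while $\ell_t=\ell$, gives
\[
\tau_i^{\mathrm{proj}}=\sum_{\ell=0}^{s_i-1} D_\ell .
\]
Since each draw raises $\ell$ by at most one, every phase is visited.

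The key step is to lower bound $\mathbb E[D_\ell]$ by $n/(\mu_i-\ell)$. Condition on the history up to the moment the process enters phase $\ell$. Let $U:=\pi_i(V_t)$, which has dimension $\ell$ and is spanned by projections of drawn columns. A draw $g_j$ increases $\ell$ only if $\pi_i(g_j)\notin U$, and this forces $\pi_i(g_j)\neq0$. I claim that at least $\ell$ of the $\mu_i$ columns with nonzero projection have projection inside $U$. The reason is that $U$ is the span of $\pi_i(g_{\xi_1}),\dots,\pi_i(g_{\xi_t})$, so one can extract $\ell$ drawn columns whose projections form a basis of $U$. These columns are distinct indices, each has nonzero projection (so it is counted in $\mu_i$), and each lies in $U$.

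While the process stays in phase $\ell$, the space $U$ is fixed. So every draw in the phase succeeds, meaning it leaves the phase, with probability $p\le(\mu_i-\ell)/n$, independently of the past. Hence $D_\ell$ is geometric with success probability $p$ conditional on the entry history, and
\[
\mathbb E[D_\ell]\ge \frac{n}{\mu_i-\ell}.
\]
Here $\mu_i-\ell\ge \mu_i-s_i+1\ge1$, because $\mu_i\ge s_i$ for systematic codes. Taking expectations over the history and summing gives
\[
\mathbb E[\tau_i^{\mathrm{proj}}]\ \ge\ \sum_{\ell=0}^{s_i-1}\frac{n}{\mu_i-\ell}\ =\ n\bigl(H_{\mu_i}-H_{\mu_i-s_i}\bigr).
\]

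The main obstacle is the counting step, namely that at least $\ell$ \emph{distinct} nonzero-projection columns already lie in $U$. The argument above settles it by choosing a basis of $U$ from the drawn projections. The rest is routine: the conditional success probability is constant within a phase because $U$ does not change, and this is what makes the geometric lower bound valid conditional on the entry history via the strong Markov property.
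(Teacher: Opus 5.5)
Your proof is correct and follows essentially the same route as the paper. Both cite the earlier lemma for $\tau_i^{\mathrm{proj}}\le\tau_i$ and split $\tau_i^{\mathrm{proj}}$ into phase lengths. Both note that a basis of the current projected span already uses $\ell$ distinct nonzero-projection columns, so each draw succeeds with probability at most $(\mu_i-\ell)/n$, and then sum the geometric lower bounds. Your explicit argument for distinctness (linear independence of the chosen projections) and for conditioning at the phase-entry time is a slightly more careful version of the paper's argument.
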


\begin{IEEEproof}
We prove the claim for $i=1$; the proof for $i=2$ is identical.
By~\cite[Lem.~13]{barlev26coded},
$E_1(G)\geq\mathbb E[\tau_1^{\mathrm{proj}}]$,
so it remains to bound the projected recovery time.

Suppose that the current projected span $\pi_1(V_t)$ has dimension $\ell<s_1$. Since $\pi_1(V_t)$ is spanned by the projections of the columns drawn so far, there are $\ell$ columns whose projections form a basis of $\pi_1(V_t)$. Thus, at least $\ell$ of the $\mu_1$ columns with nonzero projection already project into the current span.
Consequently, at most $\mu_1-\ell$ columns can increase its dimension.

Until the next dimension increase, the projected span remains unchanged. Hence, the probability that any given draw increases its dimension is at most $(\mu_1-\ell)/n$, and the expected waiting time for this increase is at least
$\frac{n}{\mu_1-\ell}$.

For $\ell\in\{0,\ldots,s_1-1\}$, let $Y_\ell$ denote the number of draws required for the projected span to increase from dimension $\ell$ to dimension $\ell+1$. Conditioned on the projected span when dimension $\ell$ is first reached, $Y_\ell$ is geometric with success probability at most $(\mu_1-\ell)/n$. Therefore,
\[
\mathbb E[Y_\ell]\geq\frac{n}{\mu_1-\ell}.
\]

Since the projected span must increase successively from dimension $0$ to dimension $s_1$,
\[
\tau_1^{\mathrm{proj}}
=
\sum_{\ell=0}^{s_1-1}Y_\ell.
\]
By linearity of expectation,
\begin{align*}
\mathbb E[\tau_1^{\mathrm{proj}}]
&=
\sum_{\ell=0}^{s_1-1}\mathbb E[Y_\ell]\\
&\geq
\sum_{\ell=0}^{s_1-1}\frac{n}{\mu_1-\ell}\\
&=
n\bigl(H_{\mu_1}-H_{\mu_1-s_1}\bigr).
\end{align*}
\end{IEEEproof}

We now combine the bounds for the two files using \eqref{eq:mu_sum}.

\begin{corollary}\label{cor:mixed_bound}
For every systematic generator matrix $G$,
\begin{equation}\label{eq:mixed_bound}
\frac{s_1}{E_1(G)}+\frac{s_2}{E_2(G)}
\leq
1+\frac{M-(k-2)/2}{n}.
\end{equation}
\end{corollary}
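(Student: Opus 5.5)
The plan is to bound each term $s_i/E_i(G)$ separately using Theorem~\ref{thm:projected_bound}, turn the harmonic difference into a linear expression in $\mu_i$ and $s_i$, and then add the two bounds using the column count~\eqref{eq:mu_sum}.

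\textbf{Step 1: from the harmonic difference to an inverse mean.} By Theorem~\ref{thm:projected_bound},
\[
E_i(G)\geq n\sum_{j=0}^{s_i-1}\frac{1}{\mu_i-j}.
\]
Since $G$ is systematic, $\mu_i\geq s_i$, so every denominator $\mu_i-j$ with $0\leq j\leq s_i-1$ is a positive integer. I will apply the arithmetic--harmonic mean inequality (equivalently, Cauchy--Schwarz or convexity of $x\mapsto 1/x$) to the $s_i$ positive numbers $\mu_i-j$:
\[
\sum_{j=0}^{s_i-1}\frac{1}{\mu_i-j}
\;\geq\;
\frac{s_i^2}{\sum_{j=0}^{s_i-1}(\mu_i-j)}
=\frac{s_i^2}{s_i\mu_i-\frac{s_i(s_i-1)}{2}}
=\frac{s_i}{\mu_i-\frac{s_i-1}{2}}.
\]
The quantity $\mu_i-(s_i-1)/2$ is positive, so this is legitimate.

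\textbf{Step 2: bound each ratio.} Combining the two displays gives
\[
\frac{s_i}{E_i(G)}\leq\frac{\mu_i-\frac{s_i-1}{2}}{n}.
\]

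\textbf{Step 3: sum over $i$.} Adding the two bounds and using $\mu_1+\mu_2=n+M$ from~\eqref{eq:mu_sum} together with $s_1+s_2=k$ gives
\[
\frac{s_1}{E_1(G)}+\frac{s_2}{E_2(G)}
\leq\frac{n+M-\frac{k-2}{2}}{n}
=1+\frac{M-(k-2)/2}{n}.
\]
This is exactly~\eqref{eq:mixed_bound}.

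The only step that needs care is Step~1, namely choosing the right convexity estimate. A cruder bound such as $1/(\mu_i-j)\geq 1/\mu_i$ would lose the $-(k-2)/2$ correction, and that correction is precisely what makes the corollary yield the threshold $k\geq 2M+2$. The AM--HM inequality keeps the average denominator $\mu_i-(s_i-1)/2$, which gives the required saving. Everything else follows by direct substitution from Theorem~\ref{thm:projected_bound} and~\eqref{eq:mu_sum}.
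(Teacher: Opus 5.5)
Your proposal is correct and is essentially the paper's proof: it bounds each $s_i/E_i(G)$ by combining Theorem~\ref{thm:projected_bound} with the harmonic--arithmetic mean inequality on $\mu_i,\mu_i-1,\ldots,\mu_i-s_i+1$. It then sums the two bounds using $\mu_1+\mu_2=n+M$ from~\eqref{eq:mu_sum}, which relies on the standing no-zero-columns assumption, together with $s_1+s_2=k$.
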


\begin{IEEEproof}
Fix $i\in\{1,2\}$. By Theorem~\ref{thm:projected_bound},
\[
\frac{s_i}{E_i(G)}
\leq
\frac{1}{n}
\left(
\frac{s_i}{H_{\mu_i}-H_{\mu_i-s_i}}
\right).
\]
Since $\mu_i\geq s_i$, the $s_i$ numbers
\[
\mu_i,\mu_i-1,\ldots,\mu_i-s_i+1
\]
are positive. By the definition of the harmonic numbers, the expression
in parentheses is their harmonic mean. Their arithmetic mean is
\begin{align*}
\frac{1}{s_i}\sum_{\ell=0}^{s_i-1}(\mu_i-\ell)
=
\mu_i-\frac{1}{s_i}\sum_{\ell=0}^{s_i-1}\ell
=
\mu_i-\frac{s_i-1}{2}.
\end{align*}
Since the harmonic mean is at most the arithmetic mean,
\[
\frac{s_i}{H_{\mu_i}-H_{\mu_i-s_i}}
\leq
\mu_i-\frac{s_i-1}{2}.
\]
It follows that
\[
\frac{s_i}{E_i(G)}
\leq
\frac{1}{n}
\left(\mu_i-\frac{s_i-1}{2}\right).
\]

Applying this inequality for $i=1,2$, and using
$\mu_1+\mu_2=n+M$ and $s_1+s_2=k$, gives
\begin{align*}
\frac{s_1}{E_1(G)}+\frac{s_2}{E_2(G)}
&\leq
\frac{\mu_1+\mu_2-(k-2)/2}{n}\\
&=
1+\frac{M-(k-2)/2}{n}.
\end{align*}\vspace{-1.5ex}
\end{IEEEproof}

Corollary~\ref{cor:mixed_bound} immediately gives our main result.

\begin{theorem}\label{thm:main}
Conjecture~\ref{conj:hyperbolic} holds for every rank-$k$ systematic
generator matrix with no zero columns whenever the number~$M$ of mixed
columns satisfies
\[
M\leq\frac{k-2}{2}.
\]
For equal file dimensions $s_1=s_2=s$, this condition becomes
$M\leq s-1$. In particular, the conjecture holds when $M\leq1$ and
$k\geq4$.
\end{theorem}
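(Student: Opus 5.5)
The plan is to derive Theorem~\ref{thm:main} directly from Corollary~\ref{cor:mixed_bound}. Its right-hand side is $1+\frac{M-(k-2)/2}{n}$. The hypothesis $M\le (k-2)/2$ makes the numerator $M-(k-2)/2$ nonpositive. Since $n\ge k\ge 1$ is positive, the whole correction term is then at most $0$, and \eqref{eq:hyperbolic} follows for every systematic $G=[\,I_k\mid P\,]$ with no zero columns.

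Before concluding, I will check that the hypotheses of Corollary~\ref{cor:mixed_bound} and of the ingredients behind it are met. Systematicity gives $\mu_i\ge s_i$, which Theorem~\ref{thm:projected_bound} and the harmonic--arithmetic mean step both use. The absence of zero columns gives \eqref{eq:column_census} and \eqref{eq:mu_sum}. I should also confirm that the side condition $\max\{s_1,s_2\}\ge2$ of Conjecture~\ref{conj:hyperbolic} is automatic. Since $M\ge0$, the hypothesis forces $k\ge2$. The only case with $k\ge2$ and $\max\{s_1,s_2\}<2$ is $s_1=s_2=1$, i.e.\ $k=2$. That case requires $M=0$, so the code is file-dedicated. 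It is therefore consistent with (and in any case covered by) the inequality proved. So there is nothing to exclude beyond what the statement says.

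The two special cases are simple arithmetic. For $s_1=s_2=s$ we have $k=2s$, so $(k-2)/2=s-1$ and the condition reads $M\le s-1$. For $M\le1$, the condition $1\le (k-2)/2$ is equivalent to $k\ge4$.

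There is no real obstacle here, since all the analytic work is already done in Theorem~\ref{thm:projected_bound} and Corollary~\ref{cor:mixed_bound}. The only point needing care is the degenerate small-$k$ bookkeeping around the side condition, and the argument above handles it.
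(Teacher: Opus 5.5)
Your proposal is correct and follows the paper's route exactly. The paper also obtains Theorem~\ref{thm:main} immediately from Corollary~\ref{cor:mixed_bound}, since $M\le (k-2)/2$ makes the correction term $\bigl(M-(k-2)/2\bigr)/n$ nonpositive; your extra checks are accurate and only make explicit what the paper leaves implicit (the corollary's hypotheses, the degenerate case $s_1=s_2=1$ forcing $M=0$, and the arithmetic for $s_1=s_2=s$ and for $M\le 1$, $k\ge 4$).
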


When $k=3$, the assumption $\max\{s_1,s_2\}\geq2$ leaves only the partition $\{s_1,s_2\}=\{1,2\}$. For  $n=k+1=4$, the remaining case $M=1$ can also be proved by a more detailed direct analysis; due to space limitations, we defer the proof to the long version of this work.  

\section{Conclusion}\label{sec:conclusion}

We studied the expected retrieval times of two files of dimensions $s_1$ and $s_2$ encoded by a systematic $[n,k]$ linear code over an arbitrary finite field, where $k=s_1+s_2$, and assumed that the generator matrix has no zero columns. By tracking the sampled span and its intersections with the two file subspaces, we obtained a pathwise comparison between samples that advance both files and samples that advance neither. We also sharpened a projection bound for the expected retrieval times. If $M$ denotes the number of mixed columns, this gives
\vspace{-1ex}
$$
\frac{s_1}{E_1(G)}+\frac{s_2}{E_2(G)}
\leq
1+\frac{M-(k-2)/2}{n}.
$$
Consequently, whenever
$M\leq\frac{k-2}{2}$,
we obtain the conjectured hyperbolic bound
$$
\frac{s_1}{E_1(G)}+\frac{s_2}{E_2(G)}\leq1.
$$
For equal file dimensions $s_1=s_2=s$, the sufficient condition becomes
$M\leq s-1$.

In particular the result covers $M=1$ whenever $k\geq4$. The case $k=3$, $n=4$, and $M=1$ can be handled by a more detailed direct analysis. Our current efforts focus on the remaining cases with small values of $M$ and $k$ and, more generally, on strengthening the bound outside the range $M\leq(k-2)/2$. The projection argument used here records only the number of columns having a nonzero component in each file subspace. We are examining whether the supports and linear dependencies of the mixed columns, together with the pathwise comparison, can provide stronger bounds.\\
\noindent Other directions include:
\begin{itemize}
\item Determine whether file-dedicated MDS codes are Pareto-optimal at finite lengths. Equivalently, can a code with the same file dimensions and code length improve the expected retrieval time of one file relative to a local MDS code without increasing that of the other?
\item Extend the results to nonsystematic codes and to more than two files. For the case of several files, this first requires identifying the appropriate analogue of the hyperbolic bound.
\item Study nonuniform sampling distributions, which model differences in molecule abundance caused by synthesis and amplification biases. In this setting, the retrieval times depend on both the generator matrix and the sampling probabilities, and a weighted analogue of the hyperbolic bound may be needed.
\item Extend the model to noisy retrieval, where sampled strands may be read incorrectly. The stopping condition must then account for both the sampled linear information and the error-correction capability of the code.
\end{itemize}

\newpage
\IEEEtriggeratref{12} 
\bibliographystyle{IEEEtran}
\bibliography{refs.bib}

\begin{thebibliography}{10}
\providecommand{\url}[1]{#1}
\csname url@samestyle\endcsname
\providecommand{\newblock}{\relax}
\providecommand{\bibinfo}[2]{#2}
\providecommand{\BIBentrySTDinterwordspacing}{\spaceskip=0pt\relax}
\providecommand{\BIBentryALTinterwordstretchfactor}{4}
\providecommand{\BIBentryALTinterwordspacing}{\spaceskip=\fontdimen2\font plus
\BIBentryALTinterwordstretchfactor\fontdimen3\font minus \fontdimen4\font\relax}
\providecommand{\BIBforeignlanguage}[2]{{%
\expandafter\ifx\csname l@#1\endcsname\relax
\typeout{** WARNING: IEEEtran.bst: No hyphenation pattern has been}%
\typeout{** loaded for the language `#1'. Using the pattern for}%
\typeout{** the default language instead.}%
\else
\language=\csname l@#1\endcsname
\fi
#2}}
\providecommand{\BIBdecl}{\relax}
\BIBdecl

\bibitem{church2012next}
G.~M. Church, Y.~Gao, and S.~Kosuri, ``Next-generation digital information storage in {DNA},'' \emph{Science}, vol. 337, no. 6102, pp. 1628--1628, 2012.

\bibitem{goldman2013towards}
N.~Goldman, P.~Bertone, S.~Chen, C.~Dessimoz, E.~M. LeProust, B.~Sipos, and E.~Birney, ``Towards practical, high-capacity, low-maintenance information storage in synthesized {DNA},'' \emph{nature}, vol. 494, no. 7435, pp. 77--80, 2013.

\bibitem{grass2015robust}
R.~N. Grass, R.~Heckel, M.~Puddu, D.~Paunescu, and W.~J. Stark, ``Robust chemical preservation of digital information on {DNA} in silica with error-correcting codes,'' \emph{Angewandte Chemie International Edition}, vol.~54, no.~8, pp. 2552--2555, 2015.

\bibitem{erlich2017dna}
Y.~Erlich and D.~Zielinski, ``{DNA} fountain enables a robust and efficient storage architecture,'' \emph{science}, vol. 355, no. 6328, pp. 950--954, 2017.

\bibitem{organick2018random}
L.~Organick, S.~D. Ang, Y.-J. Chen, R.~Lopez, S.~Yekhanin, K.~Makarychev, M.~Z. Racz, G.~Kamath, P.~Gopalan, B.~Nguyen \emph{et~al.}, ``Random access in large-scale {DNA} data storage,'' \emph{Nature Biotechnology}, vol.~36, no.~3, pp. 242--248, 2018.

\bibitem{bar2025scalable}
D.~Bar-Lev, I.~Orr, O.~Sabary, T.~Etzion, and E.~Yaakobi, ``Scalable and robust {DNA}-based storage via coding theory and deep learning,'' \emph{Nature Machine Intelligence}, vol.~7, no.~4, pp. 639--649, 2025.

\bibitem{milenkovic2024dna}
O.~Milenkovic and C.~Pan, ``{DNA}-based data storage systems: A review of implementations and code constructions,'' \emph{IEEE Transactions on Communications}, vol.~72, no.~7, pp. 3803--3828, 2024.

\bibitem{sabary2024survey}
O.~Sabary, H.~M. Kiah, P.~H. Siegel, and E.~Yaakobi, ``Survey for a decade of coding for {DNA} storage,'' \emph{IEEE Transactions on Molecular, Biological, and Multi-Scale Communications}, vol.~10, no.~2, pp. 253--271, 2024.

\bibitem{heckel2019characterization}
R.~Heckel, G.~Mikutis, and R.~N. Grass, ``A characterization of the {DNA} data storage channel,'' \emph{Scientific reports}, vol.~9, no.~1, p. 9663, 2019.

\bibitem{erdHos1961classical}
P.~Erd{\H{o}}s and A.~R{\'e}nyi, ``On a classical problem of probability theory,'' \emph{A Magyar Tudom{\'a}nyos Akad{\'e}mia Matematikai Kutat{\'o} Int{\'e}zet{\'e}nek K{\"o}zlem{\'e}nyei}, vol.~6, no. 1-2, pp. 215--220, 1961.

\bibitem{flajolet1992birthday}
P.~Flajolet, D.~Gardy, and L.~Thimonier, ``Birthday paradox, coupon collectors, caching algorithms and self-organizing search,'' \emph{Discrete Applied Mathematics}, vol.~39, no.~3, pp. 207--229, 1992.

\bibitem{barlev2025cover}
D.~Bar-Lev, O.~Sabary, R.~Gabrys, and E.~Yaakobi, ``Cover your bases: How to minimize the sequencing coverage in {DNA} storage systems,'' \emph{IEEE Transactions on Information Theory}, vol.~71, no.~1, pp. 192--218, 2025.

\bibitem{hanna2026reliability}
S.~K. Hanna, ``On the reliability of information retrieval from {MDS} coded data in {DNA} storage,'' \emph{IEEE Transactions on Molecular, Biological, and Multi-Scale Communications}, 2026.

\bibitem{cao2026modeling}
R.~Cao, P.~Zhou, and X.~Chen, ``Modeling dna storage retrieval reliability via sequencing coverage depth,'' \emph{Briefings in Bioinformatics}, vol.~27, no.~3, p. bbag303, 2026.

\bibitem{bertuzzo2026duality}
M.~Bertuzzo, A.~Ravagnani, and E.~Yaakobi, ``The {DNA} coverage depth problem: Duality, weight distributions, and applications,'' \emph{arXiv preprint arXiv:2603.06489}, 2026.

\bibitem{grunbaum2026general}
Y.~Grunbaum and E.~Yaakobi, ``General coverage models-structure, monotonicity and shotgun sequencing,'' in \emph{2026 IEEE International Symposium on Information Theory (ISIT)}.\hskip 1em plus 0.5em minus 0.4em\relax IEEE, 2026, pp. 1--6.

\bibitem{gruica2025combinatorial}
A.~Gruica, D.~Bar-Lev, A.~Ravagnani, and E.~Yaakobi, ``A combinatorial perspective on random access efficiency for {DNA} storage,'' \emph{IEEE Transactions on Information Theory}, 2025.

\bibitem{gruica2026geometry}
A.~Gruica, M.~Montanucci, and F.~Zullo, ``The geometry of codes for random access in {DNA} storage,'' \emph{Designs, Codes and Cryptography}, vol.~94, no.~5, p. 114, 2026.

\bibitem{boruchovsky2026making}
A.~Boruchovsky, O.~Elishco, R.~Gabrys, A.~Gruica, I.~Tamo, and E.~Yaakobi, ``Making it to first: The random access problem in {DNA} storage,'' \emph{IEEE Transactions on Information Theory}, 2026.

\bibitem{bodur2026random}
{\c{S}}.~Bodur, S.~Lia, H.~H. L{\'o}pez, R.~Ludhani, A.~Ravagnani, and L.~Seccia, ``The random variables of the {DNA} coverage depth problem,'' \emph{IEEE Transactions on Information Theory}, 2026.

\bibitem{wang2026random}
C.~Wang and E.~Yaakobi, ``Random access in {DNA} storage: Algorithms, constructions, and bounds,'' \emph{arXiv preprint arXiv:2601.07053}, 2026.

\bibitem{abraham2024covering}
H.~Abraham, R.~Gabrys, and E.~Yaakobi, ``Covering all bases: The next inning in {DNA} sequencing efficiency,'' in \emph{2024 IEEE International Symposium on Information Theory (ISIT)}.\hskip 1em plus 0.5em minus 0.4em\relax IEEE, 2024, pp. 464--469.

\bibitem{gruica2026generalized}
A.~Gruica, A.~Petrillo, and F.~Zullo, ``The generalized random access problem for linear codes,'' \emph{arXiv preprint arXiv:2608.20152}, 2026.

\bibitem{barlev26coded}
D.~Bar-Lev, ``Coded information retrieval for block-structured {DNA}-based data storage,'' \emph{arXiv preprint arXiv:2603.17154}, 2026.

\end{thebibliography}

\end{document}